\pdfoutput=1
\documentclass[conference]{IEEEtran}
\IEEEoverridecommandlockouts

\usepackage{cite}
\usepackage{amsmath,amssymb,amsfonts}
\usepackage{amsthm}
\usepackage{graphicx}
\usepackage{booktabs}
\usepackage{multirow}
\usepackage{array}
\usepackage{xcolor}
\usepackage{listings}
\usepackage{tikz}
\usetikzlibrary{arrows.meta,positioning,fit,backgrounds,calc}
\usepackage{url}
\usepackage[hidelinks]{hyperref}

\newtheorem{proposition}{Proposition}

\lstdefinestyle{ebpf}{
  language=C,
  basicstyle=\ttfamily\scriptsize,
  keywordstyle=\color{blue!60!black}\bfseries,
  commentstyle=\color{green!40!black}\itshape,
  stringstyle=\color{red!60!black},
  showstringspaces=false,
  columns=fullflexible,
  keepspaces=true,
  numbers=left,
  numberstyle=\tiny\color{gray},
  numbersep=4pt,
  xleftmargin=10pt,
  frame=tb,
  breaklines=true,
  morekeywords={__u8,__u32,__u64,__be16,__be32,SEC,bool}
}

\begin{document}

\title{Line-Rate GTP-U Admission Control at the Edge of a Cloud-Native 5G Core: An XDP-Based Design for Kubernetes-Hosted User Plane Functions}

\author{\IEEEauthorblockN{Simhadri Podala Narasimha}
\IEEEauthorblockA{\textit{Independent Researcher}\\
Littleton, Colorado, USA\\
podalas@gmail.com}}

\maketitle

\begin{abstract}
The User Plane Function (UPF) of a 5G Standalone core terminates every GPRS Tunnelling Protocol user-plane (GTP-U) packet arriving from the radio access network, which makes its N3 interface both the busiest and the most exposed point of the mobile data path. As operators migrate the core onto Kubernetes and public cloud, the UPF increasingly shares a general-purpose Linux kernel with other workloads, and kernel-bypass frameworks such as DPDK become harder to operate. This paper presents \emph{GTP-Guard}, a design for line-rate GTP-U admission control built on the eXpress Data Path (XDP) hook of the Linux kernel. GTP-Guard drops illegitimate tunnel traffic in the network driver, before socket-buffer allocation, using six ordered stages: peer allow-listing, GTP-U header validation, Tunnel Endpoint Identifier (TEID) admission against session state derived from the Packet Forwarding Control Protocol (PFCP), bounded extension-header parsing, inner-packet anti-spoofing with GTP-in-GTP detection, and per-session policing. We formalize stage ordering as a cost-minimization problem and show that sorting stages by the ratio of per-packet cost to rejection probability minimizes expected per-packet work, which motivates run-time reordering through tail-call program arrays when traffic mix shifts under attack. We further describe a Kubernetes deployment model based on a node-level DaemonSet with pinned eBPF maps that survive agent upgrades, discuss constraints specific to public-cloud virtual NICs, and define a reproducible evaluation methodology, with explicit hypotheses, for throughput, latency, CPU efficiency and attack resilience. This paper presents the design and methodology; experimental results will be reported in a subsequent version.
\end{abstract}

\begin{IEEEkeywords}
eBPF, XDP, 5G core, UPF, GTP-U, packet filtering, Kubernetes, DDoS mitigation, cloud-native network functions
\end{IEEEkeywords}

\section{Introduction}

In a 5G Standalone (SA) system, the UPF anchors user sessions and forwards traffic between the radio access network (RAN) and external data networks~\cite{ts23501}. Uplink traffic arrives on the N3 reference point encapsulated in GTP-U over UDP port 2152~\cite{ts29281}, and the UPF is programmed by the Session Management Function (SMF) over the N4 interface using PFCP~\cite{ts29244}. Because the UPF must accept tunnelled packets from a large population of gNBs and, in roaming and multi-UPF topologies, from peer UPFs on N9, its ingress is a natural target for volumetric floods, TEID scanning, spoofed inner sources and GTP-in-GTP injection~\cite{gsmafs20}.

Historically, high-capacity UPFs have relied on kernel-bypass frameworks such as DPDK~\cite{dpdk}, which dedicate polling cores and take the NIC away from the kernel. That model fits poorly with cloud-native operations: Kubernetes schedulers, container network interfaces, observability agents and public-cloud virtual NICs all assume the kernel owns the device. The eXpress Data Path (XDP)~\cite{hoiland2018xdp} offers a middle ground. An XDP program is an eBPF program that runs in the driver's receive path before the kernel allocates a socket buffer, can return a verdict for each frame (drop, pass, transmit, or redirect), and co-exists with the regular networking stack. XDP has been used for in-kernel UPFs~\cite{amaral2021cnsm,amaral2022electronics,eupf} and for GTP-U-aware receive-side scaling~\cite{scheich2023xdp,scheich2024rss}, but considerably less attention has been paid to XDP as a dedicated \emph{admission-control} layer placed in front of an existing UPF, regardless of how that UPF implements forwarding.

This paper addresses that gap. Our contributions are:
\begin{enumerate}
  \item A six-stage XDP pipeline, GTP-Guard, that validates GTP-U traffic against PFCP-derived session state and drops illegitimate packets in the driver (Section~\ref{sec:design}).
  \item A cost model for ordering filter stages, with a closed-form optimal order and a mechanism for re-ordering at run time without detaching the program (Section~\ref{sec:order}).
  \item A Kubernetes deployment model, including state persistence across upgrades, privilege scoping, and public-cloud NIC considerations (Section~\ref{sec:k8s}).
  \item A reproducible evaluation methodology for XDP-based user-plane protection, stated as testable hypotheses (Section~\ref{sec:eval}). Measured results are deferred to a later version of this paper.
\end{enumerate}

\section{Background}

\subsection{eBPF and the XDP Hook}
eBPF lets user space load verified bytecode into the kernel and attach it to hooks. Before a program is accepted, the in-kernel verifier proves memory safety and termination by symbolically executing every path, rejecting out-of-bounds packet access, unbounded loops and unchecked pointer dereferences~\cite{gershuni2019verifier}. Programs share state with user space through \emph{maps}: hash tables, arrays, longest-prefix-match (LPM) tries, per-CPU variants, ring buffers and program arrays used for tail calls~\cite{vieira2020survey}.

XDP is the earliest programmable hook on the receive path~\cite{hoiland2018xdp}. The program receives a context pointing at the raw frame in the driver's DMA buffer and returns one of five actions: \texttt{XDP\_DROP}, \texttt{XDP\_PASS} (continue into the stack), \texttt{XDP\_TX} (bounce out of the same interface), \texttt{XDP\_REDIRECT} (to another interface, CPU, or an AF\_XDP socket), or \texttt{XDP\_ABORTED}. In the original evaluation, an XDP program dropping packets reached roughly 24\,Mpps on a single core~\cite{hoiland2018xdp}, illustrating why dropping early is far cheaper than dropping after socket-buffer allocation.

\subsection{Attach Modes and AF\_XDP}
Table~\ref{tab:modes} summarizes the three attach modes. \emph{Native} mode runs inside the driver and requires driver support; \emph{generic} mode runs after socket-buffer allocation and works on any interface (including veth), but forfeits most of the performance benefit; \emph{offload} mode executes on capable SmartNICs. AF\_XDP sockets~\cite{afxdpdoc} complement XDP by letting a program redirect selected frames into a user-space memory region (UMEM), optionally with zero copy. In GTP-Guard, AF\_XDP is the hand-off path to user-space UPFs, while the filtering itself stays in XDP.

\begin{table}[t]
\centering
\caption{XDP attach modes and their relevance to UPF hosts}
\label{tab:modes}
\footnotesize
\begin{tabular}{@{}>{\raggedright\arraybackslash}p{1.15cm}>{\raggedright\arraybackslash}p{2.9cm}>{\raggedright\arraybackslash}p{3.5cm}@{}}
\toprule
\textbf{Mode} & \textbf{Where it runs} & \textbf{UPF implications} \\
\midrule
Native  & NIC driver, before \texttt{skb} allocation & Target mode for N3; requires driver support and, on some virtual NICs, MTU/queue constraints \\
Generic & After \texttt{skb} allocation & Functional fallback for testing and veth; little performance gain \\
Offload & SmartNIC hardware & Highest efficiency; limited map and helper support \\
\bottomrule
\end{tabular}
\end{table}

\subsection{The 5G User Plane and GTP-U}
The GTPv1-U header~\cite{ts29281} is eight bytes: a flags octet (3-bit version, protocol-type bit, and the E, S and PN bits), a message type, a 16-bit length and a 32-bit TEID. If any of E, S or PN is set, four optional bytes follow (sequence number, N-PDU number and next-extension-header type). On N3, the E bit is normally set because the PDU Session Container extension header carries the QoS Flow Identifier (QFI)~\cite{ts38415}. User data uses message type 255 (G-PDU); signalling messages include Echo Request/Response, Error Indication and End Marker. The UPF learns valid TEIDs, UE addresses and QoS Enforcement Rules (QERs) from PFCP Session Establishment and Modification messages carrying Packet Detection Rules (PDRs) and Forwarding Action Rules (FARs)~\cite{ts29244}.

\section{Related Work}

\textbf{XDP and eBPF data planes.} H{\o}iland-J{\o}rgensen \emph{et al.} introduced XDP and showed that it combines programmability and kernel integration with performance competitive with kernel bypass~\cite{hoiland2018xdp}. Vieira \emph{et al.} survey eBPF/XDP concepts and applications~\cite{vieira2020survey}, and Miano \emph{et al.} report lessons from building chained network services in eBPF~\cite{miano2018complex}. In production, XDP underpins L4 load balancing such as Katran~\cite{katran} and Kubernetes service acceleration in Cilium~\cite{ciliumxdp}.

\textbf{XDP for the 5G user plane.} Navarro do Amaral \emph{et al.} proposed \texttt{upf-bpf}, an in-kernel UPF for MEC environments based on XDP~\cite{amaral2021cnsm}, and later reduced BPF program injection time from about 27\,ms to 1\,ms through a decoupled design while reaching roughly 10--11\,Mpps across six cores~\cite{amaral2022electronics}. Scheich \emph{et al.} extended a UPF with XDP GTP-U handling and an XDP receive-side-scaling scheme keyed on GTP-U headers~\cite{scheich2023xdp,scheich2024rss}. The open-source eUPF attaches an XDP program that performs PDR detection, gating, QoS and forwarding and has been tested with free5GC, Open5GS and OpenAirInterface~\cite{eupf}. Deshpande and Bera analysed eBPF-enabled containerized 5G cores and showed that the shared-kernel model exposes network functions to cross-container threats~\cite{deshpande2026vuln}.

\textbf{Positioning.} Prior work uses XDP to \emph{be} the UPF. GTP-Guard instead treats XDP as a UPF-agnostic \emph{front door}: it protects any UPF, kernel-based, AF\_XDP-based or DPDK-adjacent, by admitting only traffic consistent with current PFCP state, and it focuses on stage ordering, run-time adaptation and cloud-native lifecycle concerns not addressed in earlier designs.

\section{Threat Model}

We consider an adversary who can inject packets towards the UPF's N3 or N9 address, either from a compromised or misconfigured RAN node, from a subscriber device that crafts packets inside its own tunnel, or from any network segment that can reach the transport network. We assume the N4 control channel and the Kubernetes control plane are trusted. Table~\ref{tab:threats} maps the threats we address to pipeline stages.

\begin{table}[t]
\centering
\caption{Threats addressed and the stage that mitigates each}
\label{tab:threats}
\footnotesize
\begin{tabular}{@{}>{\raggedright\arraybackslash}p{2.6cm}p{0.6cm}p{1.6cm}p{2.0cm}@{}}
\toprule
\textbf{Threat} & \textbf{Stage} & \textbf{Map} & \textbf{Verdict} \\
\midrule
Off-path source flooding N3 & S1 & LPM trie & \texttt{DROP} \\
Malformed / fuzzed GTP-U & S2, S4 & -- & \texttt{DROP} \\
TEID scanning / stale tunnels & S3 & Hash (TEID) & \texttt{DROP} \\
Inner source spoofing & S5 & Hash (TEID) & \texttt{DROP} \\
GTP-in-GTP injection & S5 & -- & \texttt{DROP} \\
Per-subscriber flood & S6 & Per-CPU array & \texttt{DROP} \\
\bottomrule
\end{tabular}
\end{table}

Out of scope are attacks on PFCP itself, encrypted payload attacks, and denial of service that saturates the physical link before the host; the latter must be absorbed upstream.

\section{GTP-Guard Design}
\label{sec:design}

\subsection{Design Goals}
GTP-Guard is guided by four goals: (G1) \emph{drop early}, so that every rejected packet costs as little as possible; (G2) \emph{stay UPF-agnostic}, so the guard can front any UPF implementation; (G3) \emph{never be the source of truth}, so that session state is always derived from PFCP and a guard failure can degrade to pass-through; and (G4) \emph{be operable on Kubernetes}, with declarative configuration, rolling upgrades and Prometheus-native telemetry.

\begin{figure*}[t]
\centering
\begin{tikzpicture}[
  font=\footnotesize,
  stage/.style={draw, rounded corners=2pt, fill=blue!6, minimum width=2.15cm, minimum height=1.05cm, align=center},
  io/.style={draw, rounded corners=2pt, fill=gray!12, minimum width=1.5cm, minimum height=1.05cm, align=center},
  drop/.style={draw, rounded corners=2pt, fill=red!10, minimum width=2.15cm, minimum height=0.5cm, align=center},
  arr/.style={-{Stealth[length=2mm]}, thick},
  darr/.style={-{Stealth[length=1.6mm]}, red!70!black}
]
\node[io] (nic) {N3 NIC\\RX queue};
\node[stage, right=0.35cm of nic] (s1) {\textbf{S1} Peer\\allow-list};
\node[stage, right=0.25cm of s1] (s2) {\textbf{S2} GTP-U\\header check};
\node[stage, right=0.25cm of s2] (s3) {\textbf{S3} TEID\\admission};
\node[stage, right=0.25cm of s3] (s4) {\textbf{S4} Ext.-header\\walk (bounded)};
\node[stage, right=0.25cm of s4] (s5) {\textbf{S5} Inner\\anti-spoof};
\node[stage, right=0.25cm of s5] (s6) {\textbf{S6} Per-session\\policing};
\node[io, right=0.35cm of s6] (out) {AF\_XDP /\\stack $\rightarrow$ UPF};

\foreach \a/\b in {nic/s1,s1/s2,s2/s3,s3/s4,s4/s5,s5/s6,s6/out} \draw[arr] (\a) -- (\b);

\node[drop, below=0.45cm of s1] (d1) {peer drop};
\node[drop, below=0.45cm of s2] (d2) {malformed};
\node[drop, below=0.45cm of s3] (d3) {unknown TEID};
\node[drop, below=0.45cm of s4] (d4) {bad ext.\ chain};
\node[drop, below=0.45cm of s5] (d5) {spoof / nested};
\node[drop, below=0.45cm of s6] (d6) {over rate};
\foreach \s/\d in {s1/d1,s2/d2,s3/d3,s4/d4,s5/d5,s6/d6} \draw[darr] (\s) -- (\d);

\node[draw, dashed, rounded corners=2pt, fill=yellow!8, above=0.4cm of s3.north east, anchor=south, minimum width=9.5cm, align=center] (maps)
  {\textbf{Pinned eBPF maps:} \texttt{peer\_allow} (LPM) \;$\cdot$\; \texttt{teid\_sess} (hash) \;$\cdot$\; \texttt{qer\_bucket} (per-CPU) \;$\cdot$\; \texttt{stats} (per-CPU) \;$\cdot$\; \texttt{stages} (prog array)};
\draw[dashed, -{Stealth[length=1.6mm]}] (maps.south -| s1) -- (s1.north);
\draw[dashed, -{Stealth[length=1.6mm]}] (maps.south -| s3) -- (s3.north);
\draw[dashed, -{Stealth[length=1.6mm]}] (maps.south -| s5) -- (s5.north);
\draw[dashed, -{Stealth[length=1.6mm]}] (maps.south -| s6) -- (s6.north);

\node[font=\scriptsize, gray, below=0.08cm of d4] {non-GTP-U and GTP signalling (echo, end marker, error indication) take \texttt{XDP\_PASS} early};
\end{tikzpicture}
\caption{GTP-Guard XDP pipeline on the N3 receive path. Each stage either rejects in the driver or hands the frame to the next stage; only admitted G-PDUs reach the UPF.}
\label{fig:pipeline}
\end{figure*}

\subsection{Pipeline}
Fig.~\ref{fig:pipeline} shows the pipeline, and Fig.~\ref{lst:xdp} gives a condensed implementation of the IPv4 path.

\textbf{Pre-classification.} Frames that are not IPv4/UDP to port 2152 receive \texttt{XDP\_PASS} immediately, so the guard is transparent to management, PFCP and N6 traffic sharing the interface.

\textbf{S1 -- Peer allow-list.} The outer source address is looked up in an LPM trie holding gNB and N9 peer prefixes. The trie is populated from operator inventory (for example, via a Kubernetes custom resource) rather than from PFCP, because gNB addressing changes far less frequently than sessions.

\textbf{S2 -- Header validation.} The guard enforces version~1 and the protocol-type bit, and passes non-G-PDU messages (Echo, End Marker, Error Indication) to the UPF unmodified, since their volume is low and their semantics belong to the UPF.

\textbf{S3 -- TEID admission.} The TEID indexes a hash map whose entries mirror the uplink PDRs currently installed by the SMF. A miss means the tunnel does not exist on this UPF and the packet is dropped. This single lookup removes TEID scanning and traffic for torn-down sessions.

\textbf{S4 -- Extension-header walk.} The optional fields and extension-header chain are traversed in a loop bounded by a compile-time constant, satisfying the verifier while accommodating the PDU Session Container that is ubiquitous on N3~\cite{ts38415}. Chains that exceed the bound or overrun the frame are rejected.

\textbf{S5 -- Inner checks.} The inner source address must equal the UE address bound to the TEID, which defeats spoofing of other subscribers' addresses. An inner UDP datagram addressed to port 2152 indicates GTP-in-GTP injection and is dropped.

\textbf{S6 -- Policing.} A token bucket keyed by QER index enforces a per-session ceiling. Admitted packets are redirected to an AF\_XDP socket bound to the receive queue when the UPF is an AF\_XDP application; otherwise the \texttt{XDP\_PASS} fallback of \texttt{bpf\_redirect\_map} delivers them to the kernel stack.

\begin{figure*}[t]
\begin{lstlisting}[style=ebpf]
SEC("xdp")
int gtp_guard(struct xdp_md *ctx)
{
    void *data = (void *)(long)ctx->data, *end = (void *)(long)ctx->data_end;
    struct ethhdr *eth = data;
    if ((void *)(eth + 1) > end)                     return drop(R_MALFORMED);
    if (eth->h_proto != bpf_htons(ETH_P_IP))         return XDP_PASS;
    struct iphdr *oip = (void *)(eth + 1);
    if ((void *)(oip + 1) > end)                     return drop(R_MALFORMED);
    if (oip->protocol != IPPROTO_UDP || oip->ihl != 5) return XDP_PASS;
    struct udphdr *udp = (void *)(oip + 1);
    if ((void *)(udp + 1) > end)                     return drop(R_MALFORMED);
    if (udp->dest != bpf_htons(GTPU_PORT))           return XDP_PASS;      /* not GTP-U */

    struct lpm_v4 pk = { .prefixlen = 32, .addr = oip->saddr };          /* S1: peer allow-list */
    if (!bpf_map_lookup_elem(&peer_allow, &pk))      return drop(R_PEER);

    struct gtpu_hdr *g = (void *)(udp + 1);                               /* S2: header sanity */
    if ((void *)(g + 1) > end)                       return drop(R_MALFORMED);
    if ((g->flags >> 5) != 1 || !(g->flags & GTP_PT)) return drop(R_HDR);
    if (g->type != GTP_GPDU)                         return XDP_PASS;      /* echo, end marker, err-ind */

    struct sess *s = bpf_map_lookup_elem(&teid_sess, &g->teid);           /* S3: TEID admission */
    if (!s)                                          return drop(R_TEID);

    __u8 *p = (__u8 *)(g + 1);                                            /* S4: bounded ext-header walk */
    if (g->flags & (GTP_E | GTP_S | GTP_PN)) {
        if (p + 4 > (__u8 *)end)                     return drop(R_MALFORMED);
        __u8 nxt = (g->flags & GTP_E) ? p[3] : 0;
        p += 4;
        #pragma unroll
        for (int i = 0; i < MAX_EXT_HDRS; i++) {
            if (!nxt) break;
            if (p + 4 > (__u8 *)end)                 return drop(R_MALFORMED);
            __u32 len = (__u32)p[0] * 4;              /* length in 4-octet units */
            if (len == 0 || len > MAX_EXT_LEN || p + len > (__u8 *)end) return drop(R_MALFORMED);
            nxt = p[len - 1];
            p += len;
        }
        if (nxt)                                     return drop(R_HDR);   /* chain longer than bound */
    }

    struct iphdr *iip = (void *)p;                                        /* S5: inner checks */
    if ((void *)(iip + 1) > end)                     return drop(R_MALFORMED);
    if (iip->version != 4)                           return XDP_PASS;      /* IPv6 path omitted */
    if (iip->saddr != s->ue_ipv4)                    return drop(R_SPOOF);
    if (iip->protocol == IPPROTO_UDP && iip->ihl == 5) {
        struct udphdr *iu = (void *)(iip + 1);
        if ((void *)(iu + 1) <= end && iu->dest == bpf_htons(GTPU_PORT)) return drop(R_NESTED);
    }

    if (!police(s->qer_idx, ctx->data_end - ctx->data)) return drop(R_RATE); /* S6: policing */
    return bpf_redirect_map(&xsks, ctx->rx_queue_index, XDP_PASS);          /* AF_XDP, else stack */
}
\end{lstlisting}
\caption{Condensed GTP-Guard XDP program (IPv4 path). \texttt{drop()} increments a per-CPU reason counter and returns \texttt{XDP\_DROP}; map definitions and the IPv6 path are omitted for space.}
\label{lst:xdp}
\end{figure*}

\subsection{State Model and Control-Plane Synchronization}
GTP-Guard holds three kinds of state: the peer allow-list (operator-managed), session state (PFCP-derived) and policing state (data-plane-owned). Session entries map a TEID to a compact record containing the UE IPv4 address, an IPv6 prefix, and a QER index. To satisfy G3, a user-space agent subscribes to session lifecycle events exported by the UPF, translating PFCP Session Establishment, Modification and Deletion into map updates. Where the UPF exposes no such interface, the agent can derive the same state from the N4 exchange it terminates; we prefer the event-export path because it avoids re-implementing PFCP parsing and remains correct if N4 is protected by IPsec.

Two consistency properties matter. First, \emph{insert-before-use}: a TEID must be present before the gNB sends traffic on it. Because PFCP establishment precedes the N2 signalling that hands the TEID to the gNB, an agent that applies updates within the PFCP round trip satisfies this naturally. Second, \emph{delete-after-drain}: entries are removed only after session deletion is acknowledged, so in-flight uplink packets are not penalized. A watchdog compares map cardinality with the UPF's session count and flips the guard into a pass-through mode through a single control-map flag if drift exceeds a threshold, preferring availability over filtering when state cannot be trusted.

\subsection{Stage Ordering}
\label{sec:order}
Stages are independent tests that each either reject or forward a packet. Let stage $i$ cost $c_i$ cycles and let $p_i$ be the probability that a packet reaching it is forwarded. For an order $\pi$, the expected per-packet cost is
\begin{equation}
  E[C_\pi] \;=\; \sum_{k=1}^{n} c_{\pi(k)} \prod_{j=1}^{k-1} p_{\pi(j)} .
  \label{eq:cost}
\end{equation}

\begin{proposition}
$E[C_\pi]$ is minimized by ordering stages in ascending order of $r_i = c_i/(1-p_i)$, with $r_i=\infty$ when $p_i=1$.
\end{proposition}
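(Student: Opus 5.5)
The plan is the classical adjacent-interchange (exchange) argument used for sequential testing and Smith's rule. First, I take any order $\pi$ and consider swapping two adjacent stages $a=\pi(k)$ and $b=\pi(k+1)$ to get $\pi'$. In \eqref{eq:cost}, every term with index $<k$ is unchanged, since those terms depend only on earlier stages. Every term with index $>k+1$ is also unchanged, because it carries the factor $\prod_{j\le k+1} p_{\pi(j)}$, which is symmetric in $p_a,p_b$. Writing $P=\prod_{j=1}^{k-1}p_{\pi(j)}\ge 0$ for the common prefix, the only difference is
\begin{equation*}
E[C_\pi]-E[C_{\pi'}] \;=\; P\bigl[(c_a + p_a c_b) - (c_b + p_b c_a)\bigr] \;=\; P\bigl[c_a(1-p_b) - c_b(1-p_a)\bigr].
\end{equation*}
So the swap does not increase the cost exactly when $c_b(1-p_a)\le c_a(1-p_b)$. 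When both $1-p_a$ and $1-p_b$ are positive, this is the condition $c_b/(1-p_b)\le c_a/(1-p_a)$, i.e.\ $r_b\le r_a$.

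Second, I handle the degenerate cases so that the convention $r_i=\infty$ for $p_i=1$ is consistent with the inequality. If $p_b=1$ and $p_a<1$, the difference is $-c_b(1-p_a)\le 0$ (costs are nonnegative). Hence a never-rejecting stage should not move ahead of a rejecting one, which matches $r_b=\infty>r_a$. If both $p_a=p_b=1$, the difference is $0$ and either order is optimal, consistent with ties at $\infty$. If $P=0$, the swap is cost-neutral. In every case I obtain the following: whenever $r_a>r_b$ for an adjacent pair in positions $k,k+1$, swapping them weakly decreases $E[C_\pi]$.

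Third, I conclude with a global argument. Take an optimal order $\pi^*$, which exists because the set of permutations is finite. Bubble-sort $\pi^*$ into ascending $r$ order, with ties broken arbitrarily. Each step swaps an adjacent inversion with $r_a>r_b$ and so weakly decreases the cost. The sorted order therefore has cost at most $E[C_{\pi^*}]$, so it is optimal. For tied ratios, the swap difference is exactly zero, because $c_a(1-p_b)=c_b(1-p_a)$, so all ascending orders achieve the same cost.

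The main obstacle is bookkeeping rather than depth. I need to check carefully that the terms beyond position $k+1$ are unaffected, which relies on the independence assumption that makes the survival probability a plain product. I also need the $p_i=1$ and $P=0$ edge cases to respect the $\infty$ convention, and I expect to treat these by working with the cross-multiplied inequality $c_b(1-p_a)\le c_a(1-p_b)$ throughout instead of the ratios.
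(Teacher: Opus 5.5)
Your proposal is correct and is essentially the paper's proof: the same adjacent-interchange comparison of $P(c_a+p_ac_b)$ against $P(c_b+p_bc_a)$, reduced to the cross-multiplied condition $c_a(1-p_b)\le c_b(1-p_a)$. Your handling of the $p_i=1$, $P=0$ and tied-ratio cases, together with the bubble-sort-from-an-optimal-order step, makes rigorous what the paper compresses into ``any order containing an inversion can be improved by an adjacent swap.''
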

\begin{proof}
Consider two adjacent stages $a$ followed by $b$, preceded by stages with joint forwarding probability $P$. Their contribution is $P(c_a + p_a c_b)$, and swapping gives $P(c_b + p_b c_a)$. Placing $a$ first is no worse iff $c_a + p_a c_b \le c_b + p_b c_a$, i.e.\ $c_a(1-p_b) \le c_b(1-p_a)$, i.e.\ $r_a \le r_b$. Stages outside the pair are unaffected, so any order containing an inversion can be improved by an adjacent swap, and the sorted order is optimal.
\end{proof}

Under benign load almost every packet passes every stage ($p_i \approx 1$), so ordering barely matters. Under attack it matters a great deal: in a TEID-scanning flood, S3 rejects most traffic while S1 rejects almost none, so $r_3 \ll r_1$ even though the LPM lookup is cheaper. Dependencies constrain the order (S3--S6 need the parsed GTP-U header), but S1, S3 and S6 can be permuted. GTP-Guard implements each permutable stage as a separate program in a \texttt{BPF\_MAP\_TYPE\_PROG\_ARRAY} and chains them with tail calls. The agent estimates $p_i$ from per-stage counters, recomputes $r_i$, and atomically rewrites program-array slots, re-ordering the pipeline without detaching XDP from the interface. The tail-call indirection adds a small fixed cost per stage, so the agent uses a single fused program in steady state and switches to the re-orderable chain only when drop rates cross a configured threshold.

\subsection{Policing and Receive-Side Scaling}
Per-session policing must reconcile two facts. Token buckets require shared state, but XDP runs concurrently on every receive queue, and atomic updates to a shared bucket serialize cores. Conversely, NIC receive-side scaling usually hashes the outer five-tuple, which for GTP-U carries little entropy when few gNBs serve many subscribers, so one subscriber's packets generally land on one queue. GTP-Guard therefore keeps buckets in a per-CPU array and assigns each CPU a share of the session rate. When outer-header entropy is too low to spread load, a dispatcher stage redirects packets to a CPU map keyed by a hash of the TEID, the technique shown by Scheich \emph{et al.} to improve uplink capacity substantially~\cite{scheich2023xdp,scheich2024rss}; this also makes per-CPU bucket shares exact, since all packets of a session then reach the same CPU.

\section{Cloud-Native Deployment on Kubernetes}
\label{sec:k8s}

\begin{figure}[t]
\centering
\begin{tikzpicture}[
  font=\scriptsize,
  box/.style={draw, rounded corners=2pt, align=center, minimum height=0.75cm, minimum width=2.1cm},
  arr/.style={-{Stealth[length=1.6mm]}},
  lbl/.style={font=\tiny, fill=white, inner sep=1pt}
]
\node[box, fill=gray!12, minimum width=1.0cm] (gnb) at (0,2.5) {gNB};
\node[box, fill=orange!12] (nic) at (2.9,2.5) {N3 NIC (VF/ENI)\\+ XDP GTP-Guard};
\node[box, fill=green!10]  (upf) at (6.4,2.5) {UPF pod\\(AF\_XDP or kernel)};
\node[box, fill=blue!8]    (maps) at (2.9,1.0) {Pinned maps\\\texttt{/sys/fs/bpf}};
\node[box, fill=blue!12]   (agent) at (6.4,1.0) {Guard agent\\(DaemonSet)};
\node[box, fill=gray!12]   (smf) at (6.4,4.1) {SMF};
\node[box, fill=gray!12]   (crd) at (2.9,-0.75) {Policy CR\\(GitOps)};
\node[box, fill=gray!12]   (prom) at (6.4,-0.75) {Prometheus};

\begin{scope}[on background layer]
\node[draw, dashed, rounded corners=3pt, fit=(nic)(maps)(upf)(agent), inner sep=6pt,
      label={[font=\scriptsize]above left:Kubernetes worker node}] (node) {};
\end{scope}

\draw[arr, thick] (gnb) -- node[lbl, above=1pt]{N3} (nic);
\draw[arr, thick] (nic) -- node[lbl, above=1pt]{admitted} (upf);
\draw[arr] (smf) -- node[lbl, right=2pt]{N4 PFCP} (upf);
\draw[arr] (upf) -- node[lbl, right=2pt]{session events} (agent);
\draw[arr] (agent) -- node[lbl, above=1pt]{map updates} (maps);
\draw[arr] (maps) -- node[lbl, left=2pt]{lookups} (nic);
\draw[arr] (crd) -- node[lbl, above=1pt, sloped]{allow-list, limits} (agent);
\draw[arr] (agent) -- node[lbl, right=2pt]{metrics} (prom);
\end{tikzpicture}
\caption{Deployment on a Kubernetes worker node. The agent owns program lifecycle and map population; the XDP program and its maps outlive agent restarts.}
\label{fig:k8s}
\end{figure}

\subsection{Packaging and Lifecycle}
GTP-Guard ships as a DaemonSet scheduled onto UPF nodes by node affinity (Fig.~\ref{fig:k8s}). The agent loads a CO-RE (compile once, run everywhere) object built against BTF, so a single image runs across kernel versions in a mixed fleet. Maps and links are pinned under \texttt{/sys/fs/bpf}; on restart, the agent re-opens pinned objects rather than re-creating them, so a rolling upgrade of the agent never interrupts filtering or loses session state, the same persistence property eUPF relies on for datapath continuity~\cite{eupf}. Program upgrades use atomic replacement of the attached link, so the interface is never left without a program. Where other XDP programs must share the interface, the libxdp multi-program dispatcher~\cite{libxdp} chains them with defined priorities.

\subsection{Interface Attachment}
On bare-metal or private-cloud clusters, N3 is typically an SR-IOV virtual function plumbed into the UPF pod through Multus. Native XDP must then attach to the VF in the host namespace before it is moved, or inside the pod namespace by an init container; the latter couples guard lifecycle to the UPF and is therefore avoided. On public cloud, N3 is usually a dedicated elastic network interface on the node, and the guard attaches in the host namespace.

\subsection{Privilege Scoping}
Loading XDP requires \texttt{CAP\_BPF} and \texttt{CAP\_NET\_ADMIN} on modern kernels; older kernels require \texttt{CAP\_SYS\_ADMIN}. Deshpande and Bera note that eBPF-enabled 5G pods commonly run with broad capabilities and that the shared kernel then becomes an attack path~\cite{deshpande2026vuln}. GTP-Guard confines all privileged operations to the agent, keeps the UPF container unprivileged when it uses the kernel path, restricts map access by pinning under a directory readable only by the agent, and exposes telemetry through an unprivileged sidecar.

\subsection{Public-Cloud NIC Considerations}
Virtual NICs impose constraints absent on bare metal. Native XDP support in cloud NIC drivers may require a reduced MTU, a limited number of queues, or specific buffer sizes, and loaders that do not meet these requirements can silently fall back to generic mode. A practitioner report on AWS describes AF\_XDP on the ENA driver falling back to generic mode with about a quarter of packets lost unnoticed, and reaching lossless operation only after adjusting buffer sizing, at which point the instance's packets-per-second allowance became the bottleneck~\cite{toonk2026afxdp}. GTP-Guard's agent therefore (i) requests native mode explicitly and fails closed to an alert rather than silently accepting generic mode, (ii) exports the effective attach mode and per-queue counters as metrics, and (iii) treats the provider's per-instance packet-rate allowance as a first-class capacity limit in planning.

\section{Evaluation Methodology}
\label{sec:eval}

\subsection{Per-Packet Budget}
For minimum-size Ethernet frames, line rate is 14.88\,Mpps at 10\,Gb/s, 37.2\,Mpps at 25\,Gb/s and 148.8\,Mpps at 100\,Gb/s. At a 3\,GHz clock, one core sustaining 14.88\,Mpps has a budget of about 200 cycles per packet. Real N3 frames are larger, because GTP-U, the PDU Session Container and inner headers add at least 50 bytes, so the packet rate per link is lower; however, attack traffic need not respect realistic sizes. We therefore report both packet rate and bit rate, and treat cycles per packet as the primary efficiency metric.

\subsection{Testbed and Workloads}
The reference testbed comprises a traffic generator capable of stateless GTP-U generation (e.g., TRex with GTP-U templates), a device under test running a recent long-term-support kernel with a native-XDP-capable NIC, an open-source 5G core (Open5GS or free5GC) with a UPF in either kernel or AF\_XDP mode, and UERANSIM or PacketRusher for session establishment. A second profile repeats the experiments on a managed Kubernetes cluster in public cloud. Workloads are:
\begin{itemize}
  \item \textbf{W1 Benign:} $10^4$--$10^6$ active TEIDs, IMIX-like inner payloads.
  \item \textbf{W2 TEID scan:} benign background plus random-TEID G-PDUs at rates up to line rate.
  \item \textbf{W3 Spoof:} valid TEIDs carrying forged inner sources.
  \item \textbf{W4 Nested:} GTP-in-GTP inner datagrams.
  \item \textbf{W5 Churn:} PFCP establishment and deletion at increasing rates, measuring map-update latency and insert-before-use violations.
\end{itemize}

\subsection{Baselines and Metrics}
We compare (B0) no filtering, (B1) equivalent rules in \texttt{nftables}, (B2) GTP-Guard in generic mode, (B3) GTP-Guard in native mode with a fused program, and (B4) native mode with adaptive ordering. Metrics are zero-loss throughput (RFC~2544 style), drop capacity per core, cycles per packet (from \texttt{perf} and \texttt{bpftool prog profile}), p50/p99/p99.9 added latency for admitted packets, CPU consumed by the UPF under attack, and false-drop rate for legitimate traffic. Table~\ref{tab:hyp} lists the hypotheses each experiment is designed to test. Literature figures in Table~\ref{tab:lit} provide context for expected magnitudes but are not directly comparable because hardware and pipelines differ.

\begin{table}[t]
\centering
\caption{Reported XDP performance in prior work (context only)}
\label{tab:lit}
\footnotesize
\begin{tabular}{@{}>{\raggedright\arraybackslash}p{2.0cm}>{\raggedright\arraybackslash}p{5.8cm}@{}}
\toprule
\textbf{Work} & \textbf{Reported result} \\
\midrule
XDP~\cite{hoiland2018xdp} & $\approx$24\,Mpps drop on a single core \\
upf-bpf~\cite{amaral2022electronics} & $\approx$10\,Mpps downlink, 11\,Mpps uplink on six cores; program injection reduced from 27\,ms to 1\,ms \\
XDP GTP-U ext.~\cite{scheich2023xdp} & XDP UPF throughput well above user-space UPF; TEID-keyed CPU redirect improves uplink by about $2.5\times$ \\
\bottomrule
\end{tabular}
\end{table}

\begin{table}[t]
\centering
\caption{Hypotheses the evaluation is designed to test}
\label{tab:hyp}
\footnotesize
\begin{tabular}{@{}p{0.45cm}>{\raggedright\arraybackslash}p{5.0cm}>{\raggedright\arraybackslash}p{1.9cm}@{}}
\toprule
\textbf{ID} & \textbf{Hypothesis} & \textbf{Test} \\
\midrule
H1 & Native-mode GTP-Guard adds negligible latency and no throughput loss for benign traffic relative to B0 & W1: B0 vs.\ B3 \\
H2 & Native-mode drop capacity per core substantially exceeds \texttt{nftables} and generic mode & W2: B1, B2, B3 \\
H3 & Early dropping keeps UPF CPU near its benign level during TEID-scan floods & W2: B0 vs.\ B3 \\
H4 & Adaptive ordering lowers cycles per packet versus the fused program when the attack mix shifts & W2: B3 vs.\ B4 \\
H5 & PFCP-to-map latency stays below the N2 signalling delay, so no insert-before-use violations occur & W5 \\
\bottomrule
\end{tabular}
\end{table}

\section{Discussion and Limitations}

\textbf{Verifier constraints.} The 512-byte stack, bounded loops and instruction complexity limits shape the design: extension-header walks must be bounded, and IPv6 inner packets with extension headers need a separately bounded parser. Splitting stages into tail-called programs also keeps each program's verification cost manageable.

\textbf{Fragmentation.} Outer IP fragments carry no UDP header after the first fragment and cannot be validated in XDP. GTP-Guard passes fragments to the stack, where reassembly occurs before the UPF; the transport MTU should be engineered so that N3 fragmentation does not occur in normal operation, and fragment rate is exported as a metric because a surge is itself an attack signal.

\textbf{State divergence.} A guard that drops on TEID miss can black-hole legitimate traffic if session state lags PFCP. The watchdog and pass-through flag bound the impact, but operators should monitor insert-to-first-packet latency (W5) before enabling strict mode.

\textbf{Shared-kernel risk.} XDP improves isolation of the UPF from hostile traffic, but it enlarges the trusted computing base on the node. Loading must remain limited to a single audited agent, and program signing should be adopted where the platform supports it~\cite{deshpande2026vuln}.

\textbf{Scope of filtering.} GTP-Guard does not inspect payloads, enforce full PDR precedence, or apply downlink policy on N6; the latter can reuse the same framework with a UE-address-keyed map.

\section{Conclusion and Future Work}

GTP-Guard places a PFCP-aware, line-rate admission layer in the driver receive path of a Kubernetes-hosted UPF, removing illegitimate GTP-U traffic before it consumes socket buffers or UPF cycles while remaining agnostic to the UPF's own forwarding technology. The stage-ordering result gives a principled basis for reconfiguring the pipeline when traffic shifts under attack, and the deployment model addresses lifecycle, privilege and public-cloud NIC constraints that determine whether XDP is practical in production. The immediate next step is to execute the evaluation of Section~\ref{sec:eval} and test hypotheses H1--H5; further work will extend the pipeline to IPv6 and N6, evaluate hardware offload on SmartNICs, and close the loop between per-stage telemetry and an automated controller that adjusts allow-lists, ordering and policing thresholds as part of a self-healing operations framework.

\bibliographystyle{IEEEtran}
\bibliography{references}

\end{document}